\newcommand{\reals}{\mathbb{R}}
\DeclareMathOperator*{\argmin}{arg\,min}
\global\long\def\data{y}
\global\long\def\logit{f}
\global\long\def\cova{\xi}
\global\long\def\param{x}
  \theoremstyle{plain}
  \newtheorem*{prop*}{\protect\propositionname}
\theoremstyle{plain}
\newtheorem{prop}{\protect\propositionname}
\providecommand{\propositionname}{Proposition}
\providecommand{\theoremname}{Theorem}
\begin{document}

\title{Piecewise Deterministic Markov Processes for Scalable Monte Carlo on Restricted Domains}

\author[1]{Joris~Bierkens \thanks{\texttt{joris.bierkens@tudelft.nl}; Corresponding Author}}
\author[2]{Alexandre~Bouchard-C\^{o}t\'{e}}
\author[3]{Arnaud~Doucet}
\author[4]{Andrew~B.~Duncan}
\author[5]{Paul~Fearnhead}
\author[3]{Thibaut~Lienart}
\author[6]{Gareth~Roberts}
\author[6]{Sebastian~J.~Vollmer}

\affil[1]{Delft Institute of Applied Mathematics, TU Delft, Netherlands}
\affil[2]{Department of Statistics, University of British Columbia, Canada}
\affil[3]{Department of Statistics, University of Oxford, UK}
\affil[4]{School of Mathematical and Physical Sciences, University of Sussex, UK}
\affil[5]{Department of Mathematics and Statistics, Lancaster University, UK}
%\address[tl]{Department of Statistics, University of Oxford, UK}
\affil[6]{Department of Statistics, University of Warwick, UK}
%\address[sjv]{Department of Statistics, University of Warwick, UK}

% \affil[1]{EEMCS, TU Delft, Netherlands.}
% \affil[2]{Department of Statistics, University of British Columbia, Canada.}
% \affil[3]{Department of Statistics, University of Oxford, U. K.}
% \affil[4]{School of Mathematical and Physical Sciences, University of Sussex, U. K.}
% \affil[5]{Department of Mathematics and Statistics, Lancaster University, U. K.}
% \affil[6]{Department of Statistics, University of Warwick, U. K.}

% \begin{keyword}
% MCMC \sep Bayesian statistics \sep piecewise deterministic Markov processes \sep logistic regression
% \end{keyword}

\maketitle

\begin{abstract}
% Piecewise deterministic Monte Carlo methods (PDMC) 
% consist of a class of continuous-time Markov chain Monte Carlo methods (MCMC) which have recently been shown to 
% hold considerable promise. Being non-reversible, the mixing properties of PDMC methods often significantly outperform classical reversible MCMC competitors. 
 
Piecewise Deterministic Monte Carlo algorithms enable simulation from a posterior distribution, whilst only needing to access a sub-sample of data at each iteration.  We show how they can be implemented in settings where the parameters live on a restricted domain.  
% We also introduce an open-source software package which implements Piecewise Deterministic Monte Carlo methods for sampling densities on general domains.

% In a simulation we observe that the resulting algorithm is more efficient than Hamiltonian Monte Carlo for sampling from truncated logistic regression models. 
% The theoretical framework used to justify this extension 
% lays the foundation for the development of other novel PDMC algorithms.
\end{abstract}

% \tableofcontents
% \twocolumn

%\alex{Blue comments are by Alex} \\
%\andy{Andrew is cyan}\\
%\joris{Joris is green}\\
%CAPS OFTEN IS PAUL \\

\section{Introduction}
Markov chain Monte Carlo (MCMC) methods have been central to the wide-spread use of Bayesian methods. However their applicability to some modern applications has been limited due to their high computational cost, particularly in big-data, high-dimensional settings. This has led to interest in new MCMC methods, particularly non-reversible methods which can mix  better than standard reversible MCMC \cite{DiaconisHolmesNeal2000,TuritsynChertkovVucelja2011}, and variants of MCMC that require accessing only small subsets of the data at each iteration \cite{WellingTeh2011}.  

One of the main technical challenges associated with likelihood-based inference for big data is the fact that likelihood calculation is computationally expensive (typically $O(N)$ for data sets of size $N$). 
% This is a particular problem for traditional Markov chain Monte Carlo methods which require at least one likelihood evaluation at each iteration (typically out of thousands). 
MCMC methods built from piecewise deterministic Markov processes (PDMPs) offer considerable promise for reducing this $O(N)$ burden, due to their ability to use sub-sampling techniques, whilst still being guaranteed to target the true posterior
distribution \cite{BierkensFearnheadRoberts2016,Bouchard-Cote2015,fearnhead2016piecewise,Galbraith2016,pakman2016}. Furthermore, factor graph decompositions of the target distribution can be leveraged to perform sparse updates of the variables \cite{Bouchard-Cote2015, Nishikawa2015,PetersDeWith2012}. 
%Subsampling is discussed in Section~\ref{sec:subsampling}.

PDMPs explore the state space according to constant velocity dynamics, but where the velocity changes at random event times. The rate of these event times, and the change in velocity at each event, are chosen so that the position of the resulting process has the posterior distribution as its invariant distribution. We will refer to this family of sampling methods as Piecewise Deterministic Monte Carlo methods (PDMC).

% The PDMC methods proposed to date exhibit properties that suggest they will be particularly suited to large data applications. As mentioned, it is often possible to use only a small subsample of the data at each iteration, whilst still being guaranteed to % Furthermore, the methods involve simulating a non-reversible Markov process, and there is substantial evidence that these can be designed to have better properties than reversible MCMC methods (see e.g. \cite{DiaconisHolmesNeal2000,TuritsynChertkovVucelja2011}).

% 
% The PDMC methods proposed to date exhibit some remarkable properties of practical interest: 
% \begin{enumerate}
% \item there is no rejection, i.e. each piece of computation implies exploration of new configurations for all variables of the model;
% \item there is no need for symplectic integrators (which are required for Hamiltonian Monte Carlo \cite{Duane1987}) because trajectories consist simply of straight lines;
% \item the underlying Markov process is non-reversible;
% \item factor graph decompositions of the target distribution can be leveraged to perform  sparse updates of the variables \cite{Bouchard-Cote2015,Nishikawa2015,PetersDeWith2012};
% \item and it is often possible to use only a small subsample of the data at each iteration, whilst still being guaranteed to target the true posterior distribution \cite{BierkensFearnheadRoberts2016,Bouchard-Cote2015,Galbraith2016}.
% \end{enumerate}

Existing PDMC algorithms can only be used to sample from posteriors where the parameters can take any value in $\reals^d$. In this paper (Section~\ref{sec:pdmc-restricted}) we show how to extend PDMC methodology to deal with constraints on the parameters. Such models are ubiquitous in machine learning and statistics. For example, many popular models used for binary, ordinal and polychotomous response data are multivariate real-valued latent variable models where the response is given by a deterministic function of the latent variables \cite{albert1993bayesian,fahrmeirmultivariate,train2009discrete}.
% ; e.g. a binary response is set equal to $1$ if a linear combination of the latent variables is positive and $-1$ otherwise. 
Under the posterior distribution, the domain of the latent variables is then constrained based on the values of the responses. Additional examples arise in regression where prior knowledge restricts the signs of marginal effects of explanatory variables such as in econometrics \cite{geweke1986exact}, image processing and spectral analysis \cite{bellavia2006interior}, \cite{guo2012comparison} and non-negative matrix factorization \cite{kim2007sparse}. 
% When performing Bayesian inference in these contexts, posterior simulation requires sampling from constrained distributions. 
A few methods for dealing with restricted domains are available but these either target an approximation of the correct distribution \cite{patterson2013} or are limited in scope \cite{pakman2014}.

\section{Piecewise Deterministic Monte Carlo on Restricted Domains}
\label{sec:pdmc-restricted}
Here we present the general PDMC algorithm in a restricted domain.
Specific implementations of PDMC algorithms can be derived as continuous-time limits of familiar discrete-time MCMC algorithms \cite{bierkens2015piecewise,PetersDeWith2012}, and these derivations convey much of the intuition behind why the algorithms have the correct stationary distribution. Our presentation of these methods is different, and more general. We first define a simple class of PDMPs and show how these can be simulated. We then 
% show how to calculate the stationary distribution of these PDMPs and 
give simple recipes for how to choose the dynamics of the PDMP so that it will have the correct stationary distribution.

% a concise mathematical description of the algorithm. Readers with a physics background may also like the exposition in \cite{PetersDeWith2012}, but note that this does not take into account the possibility of a restricted domain.
%Current PDMP-based methods for sampling \cite{bierkens2015piecewise,Bouchard-Cote2015,BierkensFearnheadRoberts2016} are based upon constructing and then simulating a specific continuous-time Markov process, which has $\mathbb{P}$ as its stationary distribution. These methods share similarities with Hamiltonian Monte Carlo (HMC) methods in that the underlying Markov process is designed to sample a probability distribution on an augmented state space, which has marginal density $\pi$. 

Our objective is to compute expectations with respect to  a probability distribution $\pi$  
on $\mathcal{O}\subseteq \reals^d$ which is assumed to have a smooth density, also denoted $\pi(x)$, with respect to the Lebesgue measure on $\mathcal{O}$. 
With this objective in mind, we will construct a continuous-time Markov process $Z_t = (X_t, V_t)_{t \geq 0}$ taking values in the domain $E = \mathcal{O} \times \mathcal{V}$, where $\mathcal{O}$ and $\mathcal{V}$ are subsets of $\mathbb{R}^d$, such that $\mathcal{O}$ is open, pathwise connected and with Lipschitz boundary $\partial \mathcal O$. In particular, if $\mathcal O = \mathbb R^d$ then $\partial \mathcal O = \emptyset$. The dynamics of $Z_t$ are easy to describe if one views $X_t$ as position and $V_t$ as velocity. The position process $X_t$ moves deterministically, with constant velocity $V_t$ between a discrete set of \emph{switching times} which are simulated according to $N$ inhomogeneous Poisson processes, with respective intensity functions $\lambda_i(X_t, V_t)$, $i=1,\ldots, N$, depending on the current state of the system.  At each switching time the position stays the same, but the velocity is updated according to a specified transition kernel.  More specifically, suppose the next switching event occurs from the $i^{th}$ Poisson process, then the  velocity immediately after the switch is sampled randomly from the probability distribution $Q_i(x, v, \cdot)$ given the current position $x$ and velocity $v$.  The switching times are  random, and designed in conjunction with the kernels $(Q_i)_{i=1}^N$ so that the invariant distribution of the process coincides with the target distribution $\pi$. 

To ensure that $X_t$ remains confined within $\mathcal{O}$ the velocity of the process is updated whenever $X_t$ hits $\partial\mathcal{O}$ so that the process moves back into $\mathcal{O}$.  We shall refer to such updates as \emph{reflections} 
even though they need not be specular reflections.

The resulting stochastic process is a Piecewise Deterministic Markov Process (PDMP, \cite{davis1984piecewise}). For it to be useful as the basis of a Piecewise Deterministic Monte Carlo (PDMC) algorithm we need to (i) be able to easily simulate this process; and (ii) have simple recipes for choosing the intensities, $(\lambda_i)_{i=1}^N$, and  transition kernels, $(Q_i)_{i=1}^N$, such that the resulting process has $\pi(x)$ as its marginal stationary distribution. We will tackle each of these problems in turn.

% \alex{Order here seems sub-optimal: it seems it would be 
% 	better to first explain where the rate comes from, and 
% 	then explain how to simulate it.} I DOUBT IT MATTERS IN TERMS OF READABILITY (ESPECIALLY AS WE DO NOT DISCUSS HOW TO CHOOSE THE UPPER BOUND, WHICH IS WHERE THE FORM OF LAMBDA MATTERS. GIVEN TIME, I WOULD KEEP AS IS.

\subsection{Simulation}
The key challenge in simulating our PDMP is simulating the event times. The intensity of events is a function of the state of the process. But as the dynamics between event times are deterministic, we can easily represent the intensity for the next event as a deterministic function of time. 
Suppose that the PDMP is driven by a single inhomogeneous Poisson process with intensity function
\[
\widetilde{\lambda}(u;X_t, V_t)=\lambda(X_t+ u  V_t, V_t), \quad u \geq 0.
\]
We can simulate the first event time directly if we have an explicit expression for the inverse function of the monotonically increasing function
\begin{equation}\label{eq:inversion}
u \mapsto \int_0^u \widetilde{\lambda}(s; X_t, V_t) \, ds.
\end{equation}
In this case the time until the next event is obtained by (i) simulating a realization, $y$ say, of an exponential random variable with rate $1$; and (ii) setting the time until the next event as the value $\tau$ that solves
$\int_0^{\tau} \widetilde \lambda (s;X_t,V_t) \, d s=y$.

Inverting~\eqref{eq:inversion} is often 
not practical. In such cases simulation can be carried out via  {\em thinning} \cite{lewis1979simulation}. This requires finding a tractable upper bound on the rate, $\overline \lambda(u)\geq \widetilde{\lambda}(u;X_t,V_t)$ for all $u>0$. 
Such an upper bound will typically take the form of a piecewise linear function or a step function. Note that the upper bound $\overline \lambda$ is only required to be valid along the trajectory $u \mapsto (X_t + u V_t, V_t)$ in $\mathcal O \times \mathcal V$. Therefore the upper bound may depend on the starting point $(X_t, V_t)$ of the line segment we are currently simulating.
We then propose potential events by simulating events from an inhomogenous Poisson process with rate $\overline \lambda(u)$, and accept an event at time $u$ with probability $\widetilde{\lambda}(u;X_t,V_t)/\overline \lambda(u)$. The time of the first accepted event will be the time until the next event in our PDMP.

To handle boundary reflections, at every given time $t$, we also keep track of the next reflection event in the absence of a switching event, i.e. we compute
\[
\tau_b = \inf\left\lbrace
u > 0\,:\, X_t + u V_t \not\in \mathcal{O}\right).
\]
If the boundary $\partial \mathcal{O}$ can be represented as a finite set of $M$ hyper-planes in $\mathbb{R}^d$, then the cost of computing $\tau_b$ is $O(Md)$.  When generating the switching event times and positions for $Z_t$ we determine whether a boundary reflection will occur before the next potential switching event.  If so, then we induce a switching event at time $t + \tau_b$ where $X_{t+\tau_b}\in\partial \mathcal{O}$ and sample a new velocity from the transition kernel $Q_b$, i.e. $V_{t+\tau_b} \sim Q_b(X_{t+\tau_b}, V_{t}, \cdot)$.

Although theoretically we may choose a new velocity pointing outwards and have an immediate second jump, we will for algorithmic purposes assume that the probability measure $Q_b(x,u, \cdot)$ for $(x,u) \in \partial \mathcal \mathcal{O} \times \mathcal V$ is concentrated on those directions $v$ for which $(v \cdot n(x)) \leq 0$, where $n(x)$ is the outward normal at $x \in \partial \mathcal{O}$.

For a PDMP driven by $N$ inhomogeneous Poisson processes with intensities $( \lambda_i )_{i=1}^N$ the previous steps lead to the following algorithm for simulating the next event of our PDMP. This algorithm can be iterated to simulate the PDMP for a chosen number of events or a pre-specified time-interval.
\begin{itemize}
\item[(0)] {\bf Initialize}: Set $t$ to the current time and  $(X_t,V_t)$ to the current position and velocity. 
\item[(1)] {\bf Determine bound}: For each $i \in 1,\ldots, N$, find a convenient function $\overline \lambda_i$ satisfying $\overline \lambda_i(u)\geq \widetilde{\lambda_i}(u;X_t, V_t)$ for all $u \geq 0$, depending on the initial point $(X_t,V_t)$ from which we are departing.
% \item[(2)] {\bf Propose event}: For $i=1,\ldots, N$, simulate the first event times $\tau_i'$ of a Poisson process with rate function $\overline \lambda_i$.
% \item[(3)] Let $i_{min} = \argmin_{j=1,\ldots, N}\tau'_j$ and $\tau' = \tau'_{i_{min}}$.
% \item[(4)] {\bf Accept/Reject event:} With probability 
% \[
% \frac{\widetilde{\lambda_i}(\tau';X_t, V_t)}{\overline \lambda_i(\tau')}
% \]
% accept this event. 
% \begin{itemize}
% \item[(4.1)] {\bf Upon acceptance}: set $X_{t+\tau'}=X_t+\tau 'V_t$; simulate a new velocity $V'$ from $Q_{i_{min}}(X_{t+\tau'},V_t,\cdot)$ and set $V_{t+\tau'}=V'$. 
% \item[(4.2)] {\bf Upon rejection}: set $X_{t+\tau'} = X_t + \tau' V_t$ and set $V_{t+\tau'}=V_t$. 
% \end{itemize}
\item[(2)] {\bf Propose event}: For $i=1,\ldots, N$ simulate the first event times $\tau_i'$ of a Poisson process with rate function $\overline{\lambda}_i$. Compute the next boundary reflection time $\tau_b$.
\item[(3)] Let $i_{\min} = \argmin_{j=1,\ldots, N}\tau'_j$ and $\tau' = \tau'_{i_{min}}$.
% \item[(4')] {\bf Accept/Reject event:} If $\tau_b < \tau'$ then accept the event at time $\tau = \tau_b$.  Otherwise with probability 
% \[
% \frac{\widetilde{\lambda}_{i_{min}}(\tau';X_t, V_t)}{\overline{\lambda}_{i_{min}}(\tau')}
% \]
% accept the event at time $\tau = \tau'$. 
% \begin{itemize}
% \item[(4.1')] {\bf Upon acceptance}: set $X_{t+\tau}=X_t+\tau V_t$; simulate a new velocity $V'$ from $Q(X_{t+\tau},V_t,\cdot)$ and set $V_{t+\tau}=V'$. 
% \item[(4.2')] {\bf Upon rejection}: set $X_{t+\tau} = X_t + \tau V_t$ and set $V_{t+\tau}=V_t$. 
\item[(4)] {\bf Accept/Reject event:} 
\begin{itemize}
\item[(4.1)] If $\tau_b < \tau'$ then set $\tau = \tau_b$; set $X_{t+\tau} = X_t + \tau V_t$; sample a new velocity $V_{t + \tau} \sim Q_b(X_{t+\tau}, V_t, \cdot)$. 
\item[(4.2)] Otherwise with probability 
\[
\frac{\widetilde{\lambda}_{i_{\min}}(\tau';X_t, V_t)}{\overline{\lambda}_{i_{\min}}(\tau')}
\]
accept the event at time $\tau = \tau'$. 
\begin{itemize}
\item[(4.2.1)] {\bf Upon acceptance}: set $X_{t+\tau}=X_t+\tau V_t$; sample a new velocity $V_{t+\tau} \sim Q_{i_{\min}}(X_{t+\tau},V_t,\cdot)$.
\item[(4.2.2)] {\bf Upon rejection}: set $X_{t+\tau} = X_t + \tau V_t$ and set $V_{t+\tau}=V_t$. 
\end{itemize}
\end{itemize}
\item[(5)] {\bf Update}: Record the time $t+\tau'$ and state $(X_{t+\tau'},V_{t+\tau'})$. 
\end{itemize}
% A schematic of this algorithm is shown in Figure \ref{Fig:PDMC_ex}.  

\subsection{Output of PDMC algorithms}
The output of these algorithms will be a sequence of event times $t_1, t_2, t_3, \ldots, t_K$ and associated states $(X_1, V_1)$, $(X_2, V_2), \ldots, (X_K, V_K)$.  To obtain the value of the process at times $t \in [t_k, t_{k+1})$, we can linearly interpolate the continuous path of the process between event times, i.e. $X_t = X_{t_k} + V_k(t-t_k)$.  Time integrals $\int_0^t f(X_s)\,ds$ of a function $f$ of the process $X_t$ can often be computed analytically from  the output of the above algorithm. If not they can be
approximated by numerically integrating the one dimensional integral along the piecewise linear trajectory of the PDMP.  
% For example, for $f(x)=x^{(r)}$, the $r^{th}$ component of $x$ we obtain,
% \[
% \int_0^{t_K}X_s^{(r)}\,ds = \sum_{k=1}^{K} \left(X_{t_{k-1}}^{(r)}\tau_i + V_{t_{k-1}}^{(r)}\frac{\tau_k^2}{2}\right),
% \]
% where $\tau_k=t_{k}-{t_{k-1}}$. 
Alternatively we can sample the PDMP at a set of evenly spaced time points along the trajectory and use this collection as an approximate sample from our target distribution.
% Note however that using the values of the process at event times will not produce samples drawn from the stationary distribution of the PDMP \cite{davis1984piecewise}.

Under the assumption that the resulting PDMP is ergodic (for sufficient conditions see e.g. \cite{BierkensFearnheadRoberts2016,Bouchard-Cote2015}) and that the marginal density on $\mathcal O$ of the stationary distribution of $(X_t, V_t)$ is equal to $\pi$, we have the following version of the law of large numbers 
% \cite{jacod2013limit} 
for the PDMP $(X_t,V_t)_{t \geq 0}$: For all $f \in L^2(\pi)$ we have that, with probability one,
\[ \int_{\reals^d} f(x) \pi(x) \, d x = \lim_{T \rightarrow \infty} \frac 1 T \int_0^T f(X_s) \, d s.\]
It is this formula which allows us to use PDMPs for Monte Carlo purposes.

% It can be shown from the above construction that $Z_t$ is a strong Markov process.  Additional conditions are required to ensure that $Z_t$ is `non-explosive', in the sense that, almost surely, there are only finitely many switches in every finite time interval.   The result below provides on such condition, valid for compact $\mathcal{V}$.  
% \begin{prop}
% \label{prop:nonexplosive}
% Suppose that $\mathcal{V}$ is a compact subset of $\mathbb{R}^d$ and that the switching intensities $( \lambda_i )_{i=1}^N$ are piecewise continuous, then the process $Z_t = (X_t, V_t)$ defined above is a non-explosive process.
% \end{prop}
% The proof can be found in the supplementary material.

\subsection{Choosing the intensity and transition kernels}\label{sec:choosing-intensity-transition}

Assume, as most existing PDMC methods do \cite{BierkensFearnheadRoberts2016,Bouchard-Cote2015,PetersDeWith2012}, that the target density, $\pi(x) : \mathcal O \rightarrow (0,\infty)$ is differentiable. Under this condition we can provide criteria on the switching intensities $(\lambda_i)$ and transition kernels $Q_i$ and $Q_b$ which must hold for a given probability distribution to be a stationary distribution of $Z_t$. We shall consider stationary distributions for which $x$ and $v$ are independent, i.e. distributions of the form 
$\pi(x) d x \otimes \rho(dv)$ on $E$. Furthermore we assume that 
% \begin{equation}
$\pi(x) \propto \exp(-U(x))$
% \end{equation}
where $U$ is continuously differentiable.

We impose the condition that
\begin{align}
\label{eq:switching_condition}
\int_{v \in \mathcal V} \sum_{i=1}^{N}\lambda_i(x,v) Q_i(x,v,du) \, \rho(dv)  = \int_{v \in \mathcal V}\sum_{i=1}^{N}\lambda_i(x,v) Q_i(x,u,dv) \, \rho(du), \quad x \in \mathcal O.
\end{align}
A sufficient condition for~\eqref{eq:switching_condition} is that each $Q_i$ is reversible with respect to $\rho$, i.e. for every $i=1,\ldots, N$ and $x \in \mathcal O$, we have that $Q_i(x, v, du)\rho(dv) = Q_i(x, u, dv)\rho(du)$.
% \begin{equation} \label{eq:Q-reversible}
% 	Q_i(x, v, du)\rho(dv) = Q_i(x, u, dv)\rho(du), \quad x \in \mathcal O, \quad i=1,\ldots, N.
% \end{equation}

Moreover, we shall require the following condition which relates the probability flow with the switching intensities $\lambda_i$:
\begin{equation}
\label{eq:intensity_condition}
\begin{aligned}
\sum_{i=1}^N\int_{\mathcal{V}} \lambda_i(x,v)Q_i(x,u, dv) -\sum_{i=1}^N\lambda_i(x,u)= -u\cdot\nabla U(x), \quad (x,u) \in \mathcal O \times \mathcal V.
\end{aligned}
\end{equation}

Finally, the boundary transition kernel should satisfy
\begin{equation}
\label{eq:q_boundary_reversible}
Q_b(x, u, dv)\rho(du) = Q_b(x,v,du)\rho(dv), \quad x \in \partial \mathcal O,
\end{equation}
and 
\begin{equation}
\label{eq:q_boundary_assumption}
\int_{\mathcal{V}} (n(x)\cdot u)\, Q_b(x,v,du) = - v\cdot n(x), \quad (x,v) \in \partial \mathcal O \times \mathcal V,
\end{equation}
where for $x\in\partial \mathcal{O}$, we denote by $n(x)$ the outward unit normal of $\partial \mathcal{O}$.

% \joris{but perhaps we should include a sketch of the proof as one Reviewer requested.}

\begin{prop}
\label{prop:boundary-case}
Consider the process $Z_t$ on $\mathcal O \times \mathcal{V}$ where $\mathcal{O}$ is an open, pathwise connected subset of $\mathbb{R}^d$ with Lipschitz boundary $\partial \mathcal O$. Suppose that conditions~\eqref{eq:switching_condition},\eqref{eq:intensity_condition}, \eqref{eq:q_boundary_reversible} and~\eqref{eq:q_boundary_assumption} are satisfied.
Then $\pi(x) \, dx \otimes \rho(dv)$ is an invariant distribution for the process $Z_t$.
\end{prop}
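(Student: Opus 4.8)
The plan is to argue via the extended generator of the PDMP $Z_t$. Following Davis \cite{davis1984piecewise}, on a rich enough class of test functions $f=f(x,v)$ — sufficiently regular, in particular $C^1$ in $x$ up to $\partial\mathcal O$ and with enough decay at infinity if $\mathcal O$ is unbounded, and satisfying the boundary compatibility condition $f(x,v)=\int_{\mathcal V}f(x,w)\,Q_b(x,v,dw)$ whenever $x\in\partial\mathcal O$ and $v\cdot n(x)>0$ — the generator acts by
\[
\mathcal L f(x,v) = v\cdot\nabla_x f(x,v) + \sum_{i=1}^N\lambda_i(x,v)\int_{\mathcal V}\big(f(x,w)-f(x,v)\big)\,Q_i(x,v,dw).
\]
Writing $\mu(dx\,dv)=\pi(x)\,dx\,\rho(dv)$, it suffices to verify that $\int_E\mathcal L f\,d\mu=0$ for all such $f$; invariance of $\mu$ then follows from the standard characterisation of stationary measures of PDMPs \cite{davis1984piecewise,BierkensFearnheadRoberts2016,Bouchard-Cote2015}, with the integrability and density-in-the-domain issues handled by a routine approximation argument using integrability of $\pi$ on $\mathcal O$.

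Next I would decompose $\int_E\mathcal L f\,d\mu$ into the transport contribution $\int_E v\cdot\nabla_x f\,d\mu$ and the jump contribution. For the transport contribution I would fix $v$ and apply the divergence theorem to the vector field $x\mapsto f(x,v)\,v\,e^{-U(x)}$ on $\mathcal O$ — this is precisely where the Lipschitz regularity of $\partial\mathcal O$ is used, so that the outward normal $n(x)$ and the surface measure $\sigma$ are defined $\sigma$-a.e. — and then integrate over $v$. Since $\nabla_x\!\cdot\!\big(fve^{-U}\big)=(v\cdot\nabla_x f)e^{-U}-f\,(v\cdot\nabla U)e^{-U}$, this contribution equals $\int_E f(x,v)\,(v\cdot\nabla U(x))\,d\mu$ plus the boundary term $\int_{\partial\mathcal O}\!\int_{\mathcal V}f(x,v)\,(v\cdot n(x))\,\rho(dv)\,\pi(x)\,\sigma(dx)$.

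For the jump contribution I would reorganise its ``gain'' part so that the out-flow measure $\int_{\mathcal V}\sum_i\lambda_i(x,v)Q_i(x,v,\cdot)\,\rho(dv)$ in the post-jump velocity appears, and then apply the switching condition~\eqref{eq:switching_condition}, which states precisely that, for fixed $x$, this measure equals $\rho(dw)$ times $\sum_i\int_{\mathcal V}\lambda_i(x,v)Q_i(x,w,dv)$, followed by the intensity condition~\eqref{eq:intensity_condition}, which identifies that last factor with $\sum_i\lambda_i(x,w)-w\cdot\nabla U(x)$. The $\sum_i\lambda_i$ terms then cancel against the ``loss'' part $\int_E\sum_i\lambda_i(x,v)f(x,v)\,d\mu$, leaving the jump contribution equal to $-\int_E f(x,v)\,(v\cdot\nabla U(x))\,d\mu$, which cancels the bulk transport term. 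Hence $\int_E\mathcal L f\,d\mu$ collapses to the single boundary term above.

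It remains to show that this boundary term vanishes, which I expect to be the main obstacle: it is the only place where the boundary hypotheses~\eqref{eq:q_boundary_reversible} and~\eqref{eq:q_boundary_assumption} enter, and it requires careful bookkeeping. For fixed $x\in\partial\mathcal O$ I would split $\mathcal V$ according to the sign of $v\cdot n(x)$ (the hyperplane $v\cdot n(x)=0$ being null for the velocity laws of interest). On the outward half $v\cdot n(x)>0$ the boundary compatibility condition lets one substitute $f(x,v)=\int_{\mathcal V}f(x,w)Q_b(x,v,dw)$, the inner integral being supported on inward directions $w$; reversibility~\eqref{eq:q_boundary_reversible} then transfers the weight $v\cdot n(x)$ onto $Q_b(x,w,dv)$, and~\eqref{eq:q_boundary_assumption} — used together with the convention that, at a boundary point, $Q_b$ maps outward-pointing velocities to inward-pointing ones and hence, by~\eqref{eq:q_boundary_reversible}, inward to outward — evaluates the remaining integral as $-\,w\cdot n(x)$. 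This identifies the outward part of the boundary term with minus its inward part, so the boundary term is zero and $\int_E\mathcal L f\,d\mu=0$, completing the proof. The genuinely delicate point is this interplay between the incoming and outgoing velocity subsets, the supports of $Q_b$, and the compatibility condition on $f$; a more routine but still necessary technicality is the precise justification of the divergence theorem and of the core argument on a domain whose boundary is merely Lipschitz.
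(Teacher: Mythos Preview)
Your proposal is correct and follows essentially the same route as the paper: verify $\int_E \mathcal L f\,d\mu=0$ by combining the divergence theorem on the transport term with conditions~\eqref{eq:switching_condition}--\eqref{eq:intensity_condition} on the jump term, so that only the boundary integral survives, and then kill that integral using the boundary compatibility of $f$, reversibility~\eqref{eq:q_boundary_reversible}, and~\eqref{eq:q_boundary_assumption}. The only cosmetic difference is that the paper imposes the boundary compatibility of $f$ for \emph{all} $v\in\mathcal V$ (extending $Q_b$ arbitrarily off the exit set, as noted after Proposition~\ref{prop:boundary-case}) and so avoids your inward/outward split; your version is slightly more explicit about supports but arrives at the same identity.
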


The proof of this result relies on verifying that $\mathbb E_{\pi\otimes \rho}[\mathcal Lf(X,V)]=0$ where $\mathcal L$ denotes the generator of our PDMP and is deferred to the supplementary material, Section 1.

In practice we only have to satisfy~\eqref{eq:q_boundary_reversible} and~\eqref{eq:q_boundary_assumption} on the exit region $\Gamma \subset \mathcal O \times \mathcal V$. For example if $\mathcal O = (a,b) \subset \reals^1$ and $\mathcal V = \{-1,+1\}$, then  $\Gamma = \{ b, +1 \} \cup \{a, -1\}$. The specification of $Q_b$ on $(\partial \mathcal O \times \mathcal V) \setminus \Gamma$ is irrelevant as these points are never reached by $Z_t$. On this irrelevant set, we may choose $Q_b$ as desired to satisfy~\eqref{eq:q_boundary_assumption}. 
% In practice it therefore suffices to check~\eqref{eq:q_boundary_reversible} and~\eqref{eq:q_boundary_assumption} on $\Gamma$.

% The following result is proved in the supplementary material,~\ref{sec:supp-without-restrictions}.

% \begin{prop}
% \label{prop:invariant-distribution}
% Suppose that conditions (\ref{eq:switching_condition}) and (\ref{eq:intensity_condition}) hold. Then the probability distribution $\pi (x)\, dx \otimes \rho(dv)$ is an invariant distribution of the process $Z_t$.
% \end{prop}

%Note that the previous proposition does not ensure that $Z_t$ is ergodic, i.e. that $p(x)\,dx\times\rho(dv)$ is the unique stationary distribution of the process.

\subsection{Example: The Bouncy Particle Sampler}
Current PDMC algorithms differ in terms of how the $Q_i$ and $\lambda_i$ are chosen such that the above equation holds for some simple distribution for the velocity. Here we discuss how the Bouncy Particle Sampler (BPS), introduced in \cite{PetersDeWith2012} and explored in \cite{Bouchard-Cote2015}, is an example of the framework introduced here. In the supplementary material, Section 1.1, the Zig-Zag sampler is described as a second example. In the following example $\delta_x$ denotes the Dirac-measure centered in $x$.
% \joris{It seems difficult if not impossible to include Zig-Zag into this framework. If I am not mistaken the following should be the dynamics, but it seems impossible to verify the earlier proposition.

The Bouncy Particle Sampler is obtained setting $N = 1$ and  $\rho=\mathcal{N}\left(0,I \right)$ on $\mathbb{R}^d$ or $\rho=\mathcal{U}(S^{d-1})$, i.e. the uniform distribution on the unit sphere. The single switching rate is chosen to be $\lambda_{\text{BPS}}(x,v) = \max(v \cdot \nabla U(x), 0)$, with corresponding switching kernel $Q$ which reflects $v$ with respect to the orthogonal complement of $\nabla U$ with probability 1:
\[ Q(x, v, dv') = \delta_{(I - 2 P_{\nabla U})v}(dv'),\]
where $P_y: z \mapsto \frac{z \cdot y}{\|y\|^2} y$ denotes orthogonal projection along the one dimensional subspace spanned by $y$.

As noted in \cite{Bouchard-Cote2015} this algorithm suffers from reducibility issues. These can be overcome by refreshing the velocity by drawing a new velocity independently from $\rho(dv)$. In the simplest case the refreshment times come from an independent Poisson process with constant rate $\lambda_{\text{ref}}$. This also fits in the framework above by choosing $\widetilde{\lambda}=\lambda_{\text{BPS}}+\lambda_{\text{ref}}$ and 
\begin{align*} Q(x,u,dv) =\frac{\lambda_{\text{BPS}}}{\lambda_{\text{BPS}} +\lambda_{\text{ref}}}\delta_{(I - 2 P_{\nabla U})u}(dv)  +\frac{\lambda_{\text{ref}}}{\lambda_{\text{BPS}}+\lambda_{\text{ref}}} \rho(dv).\end{align*}

As boundary transition kernel it is natural to choose  
\[
Q_b(s,v,du)=\delta_{(I-2P_{n(s)})v}(du),
\]
for $s \in \partial \mathcal{O}$, so that the process $X_t$ reflects specularly at the boundary (i.e. angle of incidence equals angle of reflection of process with respect to the boundary normal). It is straightforward to check that condition \eqref{eq:switching_condition} holds at the boundary and that \eqref{eq:q_boundary_assumption} is satisfied. 
% Similarly, for the preconditioned BPS with mass matrix, the natural choice of reflection at the boundary is given by 
% \[
% Q_b(s,v,du)=\delta_{(I-2P^M_{n(s)})v}(du),
% \]
% for which it is straightforward to confirm that the conditions of Proposition~\ref{prop:boundary-case} are satisfied.

As a generalization of the BPS, one can consider a \emph{preconditioned} version, which is 
 obtained by introducing a constant positive definite symmetric matrix $M$ to rescale the velocity process.
%  , allowing certain directions to be preferred.  
%  More specifically, we set $\mathcal{V}=\mathbb{R}^d$, $\rho \equiv \mathcal{N}(0, M)$  and 
% \[ Q(x, v, dv') = \delta_{(I - 2 P^M_{\nabla U})v}(dv'),\]
% where $P^M_y: z \mapsto \frac{z \cdot y}{y\cdot M y} My$.  It is straightforward to check that conditions (\ref{eq:switching_condition}) and (\ref{eq:intensity_condition}) hold, and so $\pi\otimes\rho$ is an invariant measure.  
The choice of $M$ plays a very similar role to the mass matrix in HMC, and careful tuning can give rise to dramatic increases in performance \cite{Galbraith2016,pakman2016}.  
% For Gaussian targets, the natural choice of $M$ is the covariance of the target distribution. 

% If we have $Q(x,v,du)=\mathcal{U}\left(\left\{ u\mid u\cdot n(s)\leq0\right\} \right)$
% then the domain condition implies that $f$ is constant in $u$ and
% therefore the integral over the boundary is zero. In the other case
% I don't know.
% \end{enumerate}
% \begin{proof}
% $Q(x,v,du)=\mathcal{U}()$
% \end{proof}
% We shall require that $\int g(u)u\cdot n(s)Q(s,v,du)\leq0$ for all
% measurable functions $g:\mathcal{V}\rightarrow[0,\inf)$.

% \subsection{The Zig-Zag case}

% GIVEN SPACE LIMITATIONS -- I WOULD GIVE BPS AS THE ONLY EXAMPLE. SO WE HAVE GENERAL FORMULAE FOR THE STATIONARY DISTRIBUTION TOGETHER WITH THE ONE EXAMPLE OF A PROCESS THAT SATISFIES THESE EQUATIONS.

% Can the Zig-Zag process be rephrased in this framework? We would choose	
% \begin{align*}
% \lambda(x,v) & =\sum_{i=1}^{d}\left(v_{i}\partial_{x_{i}}U(x)\right){}^{+}\\
% Q(x,v,du) & =\sum_{i=1}^{d}\frac{\left(v_{i}\partial_{x_{i}}U(x)\right){}^{+}}{\lambda(x,v)}\partial_{v-2e_{i}v_{i}}(u)
% \end{align*}
% and 
% \[
% \rho(dv)=\otimes_{i=1}^{d}\left(\frac{1}{2}\partial_{-1}(dv_{i})+\frac{1}{2}\partial_{1}(dv_{i})\right)
% \]

\section{Subsampling}
\label{sec:subsampling}
When using PDMC to sample from a posterior, we can use sub-samples of data at each iteration of the algorithm, as described in \cite{BierkensFearnheadRoberts2016,Bouchard-Cote2015}, which  reduces the computational complexity of the algorithm from $O(N)$ to $O(1)$, where $N$ is the size of the data, without affecting the theoretical validity of the algorithm. In the following we will assume that we can write the posterior as
$
\pi(x)\propto \prod_{i=1}^N f(\data_i;x),
$
for some function $f$. For example this would be the likelihood for a single IID data point times the $1/N$th power of the prior.

The idea of using sub-sampling, within say the Bouncy Particle Sampler (BPS), is that at each iteration of our PDMC algorithm we can replace $\nabla U(x)$ by an unbiased estimator in step (3). We need to use the same estimate both when calculating the actual event rate in the accept/reject step and, if we accept, when simulating the new velocity. 
% Whilst for each implementation of step (3) we need to use an independent estimate. 
The only further alteration we need to the algorithm is to choose an upper bound $\overline{\lambda}$ that holds for all realizations of $\widehat{\nabla U}$.  A more comprehensive explanation of this argument can be found in \cite{BierkensFearnheadRoberts2016,fearnhead2016piecewise} in the context of the Zig-Zag sampler, and in \cite{Bouchard-Cote2015,Galbraith2016} for the bouncy particle sampler.

We first present a way for estimating $\nabla U$ unbiasedly using control variates 
\cite{Bardenet:2015,BierkensFearnheadRoberts2016}.  For any $x, \hat{x}\in \mathcal{O}$ we note that $
\nabla U(x)= \nabla U(\hat{x})+\left[\nabla  U(x)- \nabla  U(\hat{x}) \right]$.
We can then introduce the estimator $\widehat{\nabla U}(x)$ of $\nabla U(x)$ by 
\begin{equation}\label{eqn:subsampling}
\textstyle
\widehat{\nabla {U}}(x)=\nabla U(\hat{x})+N \left[ 
\nabla \log f(\data_{I};x)-
\nabla \log f(\data_{I};\hat{x})
\right],
\end{equation}\label{gradientestimate}
where $I$ is drawn uniformly from $\{1,\ldots,N\}$.

It is straightforward to show that the resulting BPS algorithm uses an event rate that is
$\mathbb{E}\left[\max\left(0,(\widehat{\nabla {U}}(x)\cdot v)\right)\right]$, and that this rate and the resulting transition probability $Q$ at events satisfies Proposition~\ref{prop:boundary-case}. Hence this algorithm still targets $\pi(x)$, but only requires access to one data point at each accept-reject decision. 

Note that this gain in computational efficiency does not come for free, as it follows from Jensen's inequality that the overall rate of events will be higher. This makes mixing of the PDMC process slower. It is also immediate that the bound, $\overline{\lambda}$, we will have to use will be higher. However \cite{BierkensFearnheadRoberts2016} show that if our estimator of $\widehat{\nabla U}(x)$ has sufficiently small variance, then we can still gain substantially in terms of efficiency. In particular they give an example where the CPU cost effective sample size does not grow with $N$ -- by comparison all standard MCMC algorithms would have a cost that is at least linear in $N$.

To obtain such a low-variance estimator requires a good choice of $\hat{x}$, so that with high probability $x$ will be closer to $\hat{x}$. This involves a preprocessing step to find a value $\hat{x}$ close to the posterior mode, a preprocessing step to then calculate $\nabla U(\hat{x})$ is also needed.

We now illustrate how to find an upper bound on the event rate. Following \cite{BierkensFearnheadRoberts2016}, if we assume $L$ is a uniform (in space and $i$) upper bound on the largest eigenvalue of the Hessian of $U^i$, and if $\| v \| = 1$:
\begin{align}
& \max\left(0, \left(\nabla U(\hat{x})+N(\nabla U^{i}(X_t)-\nabla U^{i}(\hat{x}))\right)\cdot v\right) \nonumber 
\\& 
\leq \max\left(0,\nabla U(\hat{x})\cdot v\right)+N\left\| \nabla U^{i}(x)-\nabla U^{i}(\hat{x})
\nabla U^{i}(x)-\nabla^{i}U(X_t)\right\| \nonumber \\
 & \leq \max\left(0,\nabla U(\hat{x})\cdot v\right)+NL\left\| x-\hat{x}\right\| +N Lt\label{eq:ControlVariate}
\end{align}
Thus the upper bound on the intensity is of the form $\bar{\lambda}(\tau)=a+b\cdot\tau$ with $a,b\ge 0$. In this case the first arrival time can be simulated as follows
\begin{equation}
\label{eq:simulating-linear-bound} \tau'=-a/b+\sqrt{\left(\frac{a}{b}\right)^2+2\cdot\frac{R}{b} } \text{ with } R \sim \text{Exp}(1).\end{equation}
An alternative and complementary approach to improve the efficiency of this subsampling procedure is to use an estimator of the gradient \eqref{gradientestimate} where $I$ is drawn according to a distribution dependent on the observations \cite{Bouchard-Cote2015,Galbraith2016}.

% \section{Subsampling}
% \input{subsampling.tex}

% \section{Unbiased estimators and control variate}

% I WOULD MENTION THIS IN INTRO. AND DESCRIBE THIS IDEA IN DETAIL ONLY IF WE HAVE AN EXPERIMENT THAT INCLUDES THE USE OF SUBSAMPLING AND CONTROL VARIATE. THEN MAYBE SIMPLEST TO DESCRIBE HOW THE SUBSAMPLING WORKS FOR THAT EXAMPLE.

% \section{Local version?}

% NOT SURE WE HAVE SPACE FOR THIS? MAYBE DISCUSSION WHICH MENTIONS THIS IDEA WITH A POINTER TO RELEVANT PAPER?

\section{Software and Numerical Experiments}
% \label{sec:exp} \global\long\def\data{y}
A open-source Julia package \texttt{PDMP.jl} has been developed to provide efficient implementations of various recently developed piecewise deterministic Monte Carlo methods for sampling in (possibly restricted) continuous spaces.  A variety of algorithms are implemented including the Zig-Zag sampler and the Bouncy Particle Sampler with full and local refreshment along with control variate based sub-sampling  for these methods.  The package has been specifically designed with extensibility in mind, permitting rapid implementation of new PDMP based methods. The library along with code and documentation is available at \url{github.com/alan-turing-institute/PDSampler.jl}.

We use Bayesian binary logistic regression as a testbed for our newly proposed methodology and perform a simulation study. The 
data $\data_{i}\in\{-1,1\}$ is modelled by 
\begin{equation}
p(\data_{i}\vert \cova_{i},\param)=\logit(y_{i}\param^{T}\cova_{i})\label{eq.logistic}
\end{equation}
where $\cova \in \mathbb{R}^{p\times n}$ are fixed covariates and $\logit(z)=\frac{1}{1+\exp(-z)}\in[0,1]$. We will assume that we wish to fit this model under some monotonicity constraints -- so that the probability of $y=1$ is known to either increase or decrease with certain covariates. This is modeled through the constraint  $\param_i>0$ and $\param_i<0$ respectively. 
% This can be motivated for general logistic regression for questionnaires, see  \cite{Tutz2014}. 
An example where such restrictions occur naturally is in logistic regression for questionnaires, see  \cite{Tutz2014}.
In following we consider the case $\param_j \ge 0$ for $j=1,\dots,p$ along with the additional linear constraint  $\sum_{j} x_j \leq K$ where $K=10$.

For simplicity we use a flat prior over the space of parameters values consistent with our constraints. By Bayes' rule the posterior $\pi$ satisfies 
\[
\pi(\param)\propto \prod_{i=1}^{N}\logit(y_{i}\param^{T}\cova_{i}) \mbox{ for } \param\in \mathcal{O},
\]
where $\mathcal{O}$ is the space of parameter values consistent with our constraints.
We implement the BPS with subsampling. As explained in the introduction, subsampling is a key benefit of using  piecewise deterministic sampling methods; see Section~\ref{sec:subsampling}.
We use reflection at the boundary i.e. $Q_b(s,v,du)=\delta_{(I-2P_{n(s)})v}(du)$ for $s\in\partial\mathcal{O}$. 
% \red{reference Equation \ref{eqn:subsampling}, maybe give explicit version of Q and rate $\overline{\lambda}(\tau)=a+b\tau$}
We can bound the switching intensity by a linear function of time, even when we use the subsampling estimator for the switching rate. See the supplementary material, Section 2, for details on the application of subsampling in this example. 
% \cite{BierkensFearnheadRoberts2016}.
% because we can bound the second derivative of the log-likelihood of any observation.  The details for deriving an upper bound on the intensity can be found in Section  \ref{sup:Logistic} of the supplementary material. 
We use $n=10,000$ and $p=20$ and generate artificial data based on $\cova$ and $\param^\star$  whose components are a realization i.i.d. of uniformly distributed random variables satisfying the imposed constraints.  % on $[0,1]$ 

% \textcolor{blue}{
We compare the performance of BPS to standard MALA and HMC schemes, in terms of effective sample size (ESS) per epoch of data evaluation. 
% More specifically, for each scheme we generate 5 chains of increasing length and obtain the distribution of ESS for $10$ independent realisations of each chain.
For each scheme we obtain the distribution of ESS based on $10$ independent realisations of each chain.
In Figure \ref{Fig:Logistic}(a) we plot for each scheme, the distribution of ESS per epoch with respect to the function $f_1(x) = \frac{1}{p}(x_1 + \ldots + x_p)$.   Similarly, In Figure \ref{Fig:Logistic}(b) we plot the ESS per epoch for each chain with respect to the function $f_2(x) = \log \pi(x)$.   The performance of MALA and HMC appears commensurate and the BPS demonstrates a clear advantage over both in terms of ESS per epoch.

% \textcolor{blue}{
The HMC and MALA schemes were tuned by minimising the ESS with respect to the step-size, calculated from exploratory runs. For HMC we use $5$ leap-frog steps.  We find that we must tune both HMC and MALA to have a small step size due to proposals being rejected at the boundary. The ESS is estimated based on asymptotic variance using the batch means method; see \cite{BierkensDuncan2016,BierkensFearnheadRoberts2016} for details. 
% }

% In Figure \ref{Fig:Logistic} we plot for each scheme the distribution of ESS as a function of the number of times the dataset $\lbrace y_i \rbrace_{i=1}^n$ was evaluated (epoch).  Due to the significant variance in performance between schemes we change the range of the axes in each plot.  It is interesting to note that MALA appears to outperform HMC in this example.  Indeed,  due to increasingly many rejections at the boundary, the effective sample size of the HMC scheme is relatively unchanged over increasing epochs.   On the other hand, we observe that the BPS shows a clear advantage over the other samplers, with the effective sample size orders of magnitude higher and increasing rapidly with epoch. 

% \textcolor{blue}{
For specific types of constraints more efficient implementations of HMC and MALA are possible, either by introducing an appropriate transformation of the restricted state space, or by reflecting the posterior distribution along the constraint boundaries.   Moreover, we note that there exists a version of HMC which can sample from truncated Gaussian distributions \cite{pakman2014}.  However, to our knowledge there is no efficient HMC or MALA scheme able to handle generally restricted domains.
% }

The Bouncy Particle Sampler for this model was implemented using \texttt{PDSampler.jl} while the corresponding HMC and MALA samplers implemented with \texttt{Klara.jl}.  The code for this numerical experiment along with results are carefully presented in \url{github.com/tlienart/ConstrainedPDMP/}.

% For BPS we use a subsampled version of the restricted domain BPS. 

% For HMC we use $5$ leap-frog steps. We find that we must tune HMC to have a small step size due to proposals being rejected at the boundary. Note that there exists a version of HMC which can sample from truncated Gaussian distributions \cite{pakman2014} but to our knowledge no efficient HMC scheme is available for general restricted domains. We note that due to increasingly many rejections at the boundary, the effective sample size of the HMC scheme is relatively unchanged over increasing epochs.  On the other hand, we observe that the BPS shows a clear advantage over the HMC scheme in this case, with the effective sample size growing commensurately with epoch.  

\begin{figure}[t!]
\centering
    \begin{subfigure}[t]{0.45\textwidth}
    \centering  
        \includegraphics[width=\textwidth]{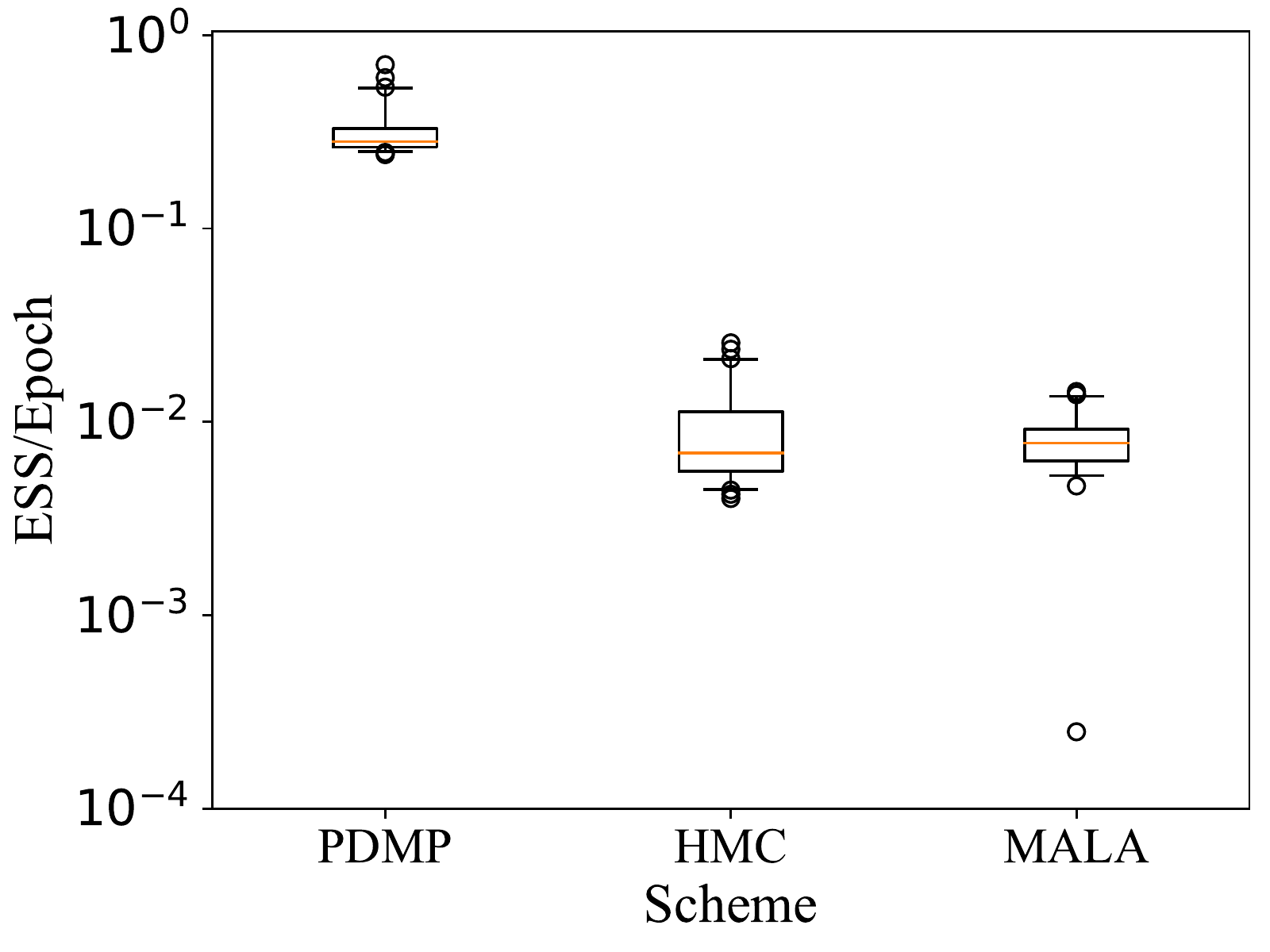}
        \caption{ESS  per epoch with respect to $f_1(x)$}
    \end{subfigure}
    ~
   \begin{subfigure}[t]{0.45\textwidth}
    \centering
        \includegraphics[width=\textwidth]{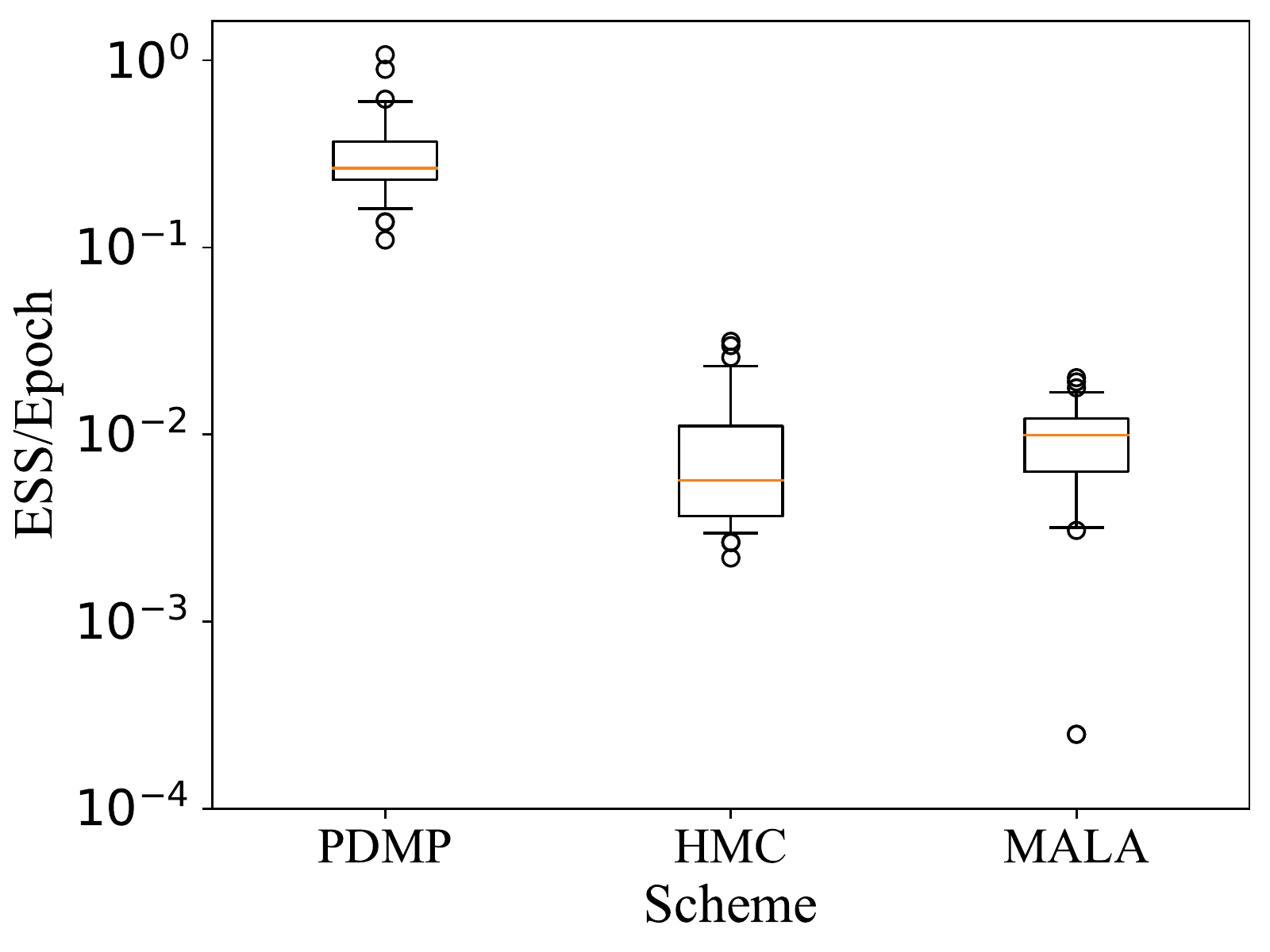}
        \caption{ESS per epoch with respect to $f_2(x)$}
    \end{subfigure}
\caption{Average ESS per epochs of data evaluation for  MALA, HMC and PDMP (BPS) applied to logistic regression with $p=20$ and $n=10000$ and parameter $\param$ constrained to be nonnegative and satisfy $\sum_j x_j \leq 10$. The graphic is based on 10 independent runs for each HMC, MALA and BPS for each choice of number of epochs. }
\label{Fig:Logistic}

\end{figure}

% \begin{figure}[t!]
% \centering
%     \begin{subfigure}[t]{0.25\textwidth}
%     \centering  
%         \includegraphics[width=\textwidth,natwidth=640,natheight=480]{ess_hmc.png}
%         \caption{Average ESS for HMC}
%     \end{subfigure}
%     ~
%    \begin{subfigure}[t]{0.25\textwidth}
%     \centering
%         \includegraphics[width=\textwidth,natwidth=640,natheight=480]{ess_mala.png}
%         \caption{Average ESS for MALA}
%     \end{subfigure}
%     ~
%     \begin{subfigure}[t]{0.25\textwidth}
%     \centering
%         \includegraphics[width=\textwidth,natwidth=640,natheight=480]{ess_pdmp.png}
%          \caption{Average ESS for PDMP}
%     \end{subfigure}
% \caption{Average ESS versus number of epochs of data evaluation for  MALA, HMC and PDMP (BPS) applied to logistic regression with $p=20$ and $n=10000$ and parameter $\param$ constrained to be positive. The graphic is based on 10 independent runs for each HMC, MALA and BPS for each choice of number of epochs. }
% \label{Fig:Logistic}

% \end{figure}

\section{Discussion}

This work provides a framework for describing a general class of PDMC methods which are ergodic with respect to a given target probability distribution. Open questions remain on how the choice of intensity function, velocity transition kernel as well as other parameters of the system influence the overall performance of the scheme.  The problem of understanding the true computational cost of such PDMC schemes is more subtle than for classical discrete time MCMC schemes: often one needs to find a balance between fast mixing of the continuous time Markov process and having a switching rate that is relatively cheap to simulate. For example, when using subsampling the mixing of the Markov process is  slower than without subsampling, but the computational cost per simulated switch is significantly smaller.
Further investigation is required to understand this delicate balance.

\section*{Acknowledgements}
All authors thank the Alan Turing Institute and Lloyds registry foundation for support. S.J.V. gratefully acknowledges funding through EPSRC EP/N000188/1.  J.B., P.F. and G.R. gratefully acknowledge EPSRC ilike grant EP/K014463/1.  A.B.D. acknowledges grant EP/L020564/1 and A.D. acknowledges grant EP/K000276/1. We kindly acknowledge comments from the editor and anonymous reviewers which have significantly improved the exposition in this paper. 
% \bibliographystyle{elsarticle-num}

% \section*{References}
 \bibliographystyle{plain}
 \bibliography{references.bib}

\newpage
\onecolumn

\pagebreak
\begin{center}
\textbf{\large Supplement to \emph{Piecewise Deterministic Markov Processes for Scalable Monte Carlo on Restricted Domains}}
\end{center}
%%%%%%%%%% Merge with supplemental materials %%%%%%%%%%
%%%%%%%%%% Prefix a "S" to all equations, figures, tables and reset the counter %%%%%%%%%%
\setcounter{equation}{0}
\setcounter{figure}{0}
\setcounter{table}{0}
\setcounter{page}{1}
\setcounter{section}{0}
\makeatletter
\renewcommand{\theequation}{S\arabic{equation}}
\renewcommand{\thefigure}{S\arabic{figure}}
%\renewcommand{\bibnumfmt}[1]{[S#1]}

% \section{Supplement to \emph{Piecewise Deterministic Markov Processes for Scalable Monte Carlo on Restricted Domains}}

In the supplementary material we provide a theoretical background for the framework for restricted domain PDMC (Section \ref{sec:supp-stationary}, which includes the Zig-Zag sampler as further example). 
% Some background information on subsampling is provided in~\ref{sec:subsample}, and 
The detailed application of subsampling to the logistic regression example may be found in Section~\ref{sup:Logistic}.

% We first establish that the Markov process $Z_t$ exists and is finite for all positive times.
% 
% \begin{proof}[Proof of Proposition \ref{prop:nonexplosive}]
% Suppose that $\mathcal{V}$ is a compact subset of $\mathbb{R}^d$, we show that $Z_t$ is non-explosive, which is equivalent to the fact that $\mathbb{E}N(t) < \infty$ for all $t > 0$, where $N(t)$ is the number of switching events that $Z_t$ undergoes in the time $[0,t]$.  Since $\mathcal{V}$ is compact, this implies there exists $r_{max} > 0$ such that $\lVert v \rVert \leq r_{max}$, for all $v\in\mathcal{V}$.  Therefore, we have $x_s \in B(x_0, tr_{max})$, for all $s \in [0, t]$.  Since $\sum_{i=1}^N \lambda_i(\cdot, v)$ is piecewise continuous for all $v\in \mathcal{V}$, this implies that there exists $\lambda_{max} > 0$ such that  $\sum\lambda_i(x_s, v_s) < \lambda_{max}$,
% for $s \in [0,t]$.  The expected number of switches up to time $t$ can thus be written as 
% \[
% \mathbb{E}N(t) = \int_0^t \sum_{i=1}^N\lambda_i(X_s, V_s)\,ds \leq \lambda_{max}t < \infty,
% \]
% giving the required result.
% \end{proof}

\section{Stationary distribution for PDMPs on restricted domains}
\label{sec:supp-stationary}
From  \cite[Section 5]{davis1984piecewise} the process $Z_t$ will have infinitesimal generator given by the closure of the operator 
\begin{equation}
\label{eq:generator}
\mathcal{L}f(x,v) = v\cdot\nabla_x f(x,v)+\sum_{i=1}^{N}\lambda_i(x,v)\int_{\mathcal{V}}(f(x,u) - f(x,v))Q_i(x,v,du),\quad (x,v)\in E,
\end{equation}
where $\mathcal{D}(\mathcal L)$ is the set of functions which are continuously differentiable with respect to $x$ on $\mathcal O$, which is decaying to infinity as $\lVert x \rVert\rightarrow \infty$ and such that   
\begin{equation}
\label{eq:boundary_condition}
f(x,v) = \int_{\mathcal{V}} f(x,u)Q_b(x,v,du),
\end{equation}
for all $(x, v)\in \partial\mathcal{O}\times \mathcal{V}$.
% Having characterized the infinitesimal generator, we can now prove Proposition \ref{prop:invariant-distribution}.
% Following \cite{davis1984piecewise}, the PDMP $Z_t$ will have infinitesimal generator given by the closure of  (\ref{eq:generator}) with domain $\mathcal{D}(\mathcal{L})$ of continuously differentiable functions vanishing sufficiently fast as $\lVert x \rVert \rightarrow \infty$, such that 
Based on this identification of the infinitesimal generator we can now provide a formal proof that the conditions of Proposition 1 of the paper are sufficient to ensure invariance of $\pi \otimes \rho$.  

\begin{proof}[Sketch Proof of Proposition 1]
Without loss of generality we take $\pi(x)=\exp(-U(x))$, i.e. the proportionality factor in $\pi(x) \propto \exp(-U(x))$ is assumed to be 1. We shall only provide a formal proof of this result, by demonstrating that
\[
\int_{\mathcal O \times \mathcal V} \mathcal{L}f(x,v)\pi(x)\,dx \,\rho(dv) = 0, \quad \mbox{ for all }f \in \mathcal{D}(\mathcal{L}),
\]
so that $\mathcal{L}$ is infinitesimally invariant.  A rigorous proof would require establishing that  $\mathcal{D}(\mathcal{L})$ as defined above is a core for the extended generator.  This is a technical result which we defer for future work. 
% \joris{check that this is not obvious.}  

For $f \in \mathcal{D}(\mathcal L)$,
\begin{align*}
 & \int_{\mathcal O}\int_{\mathcal{V}}\int_{\mathcal{V}}\sum_{i=1}^{N}\lambda_i(x,v)\left[f(x,u)-f(x,v)\right]\, Q_i(x,v,du) \pi(x)\, dx \, \rho(dv)\\
%  & =\int_{\mathcal O}\int_{\mathcal{V}}\int_{\mathcal{V}}\sum_{i=1}^{N}\lambda_i(x,v)f(x,u)Q_i(x,v,du)\rho(dv)\pi(x) \, dx-\int_{\mathcal O}\int_{\mathcal{V}}\int_{\mathcal{V}}\sum_{i=1}^{N}\lambda_i(x,v)f(x,v)Q_i(x,v,du)\rho(dv)\pi(x) \, dx\\
%  & =\int_{\mathcal O}\int_{\mathcal{V}}\int_{\mathcal{V}}\sum_{i=1}^{N}\lambda_i(x,v)f(x,u)Q_i(x,v,du)\rho(dv)\pi(x)\, dx-\int_{\mathbb{R}^{d}}\int_{\mathcal{V}}\sum_{i=1}^{N}\lambda_i(x,v)f(x,v)\,\rho(dv)\pi(x)\, dx\\
 & =\int_{\mathcal O}\int_{\mathcal{V}}\sum_{i=1}^{N} f(x,u)\left[\int_{\mathcal{V}}\lambda_i(x,v)Q_i(x,u,dv)-\lambda_i(x,u)\right]\, \rho(du) \pi(x)\, dx\\
 & =-\int_{\mathbb{R}^{d}}\int_{\mathcal{V}}f(x,u)u\cdot\nabla U(x)\, \rho(du)e^{-U(x)}\, dx\\
 & =\int_{\mathcal O}\int_{\mathcal{V}}f(x,u)u\cdot\nabla e^{-U(x)}\, \rho(du)\,dx\\
 & =-\int_{\mathcal O}\int_{\mathcal{V}}u\cdot\nabla_x f(x,u)e^{-U(x)}\,\rho(du)\,dx +\int_{\partial \mathcal O}\int_{\mathcal{V}}f(\sigma,u)(u\cdot n(\sigma))e^{-U(\sigma)}\rho(du)\, d \sigma,
\end{align*}
where the boundary term arises from integration by parts with respect to $x$.  Considering the boundary integral, by applying (4) (in the paper) which is assumed to hold on $\partial\mathcal{O}$ and  (\ref{eq:boundary_condition}) (above) we obtain
\begin{align*}
&\int_{\partial \mathcal O}\int_{\mathcal{V}}f(\sigma,u)(u\cdot n(\sigma)) e^{-U(\sigma)}\rho(du)\,d\sigma\\ & =\int_{\partial \mathcal O}\int_{\mathcal{V}}\int_{\mathcal{V}}f(\sigma,v)Q_b(\sigma,u,dv)(u\cdot n(\sigma))e^{-U(\sigma)}\rho(du)\,d\sigma \\
 & =\int_{\partial \mathcal O}\int_{\mathcal{V}}\int_{\mathcal{V}}f(\sigma,v)Q_b(\sigma,v,du)(u\cdot n(\sigma))e^{-U(\sigma)}\rho(dv)\,d\sigma\\
% & =\int_{\partial \mathcal O}\int_{\mathcal{V}}\int_{\mathcal{V}}f(s,v)Q(s,v,du)u\cdot n(s)e^{-U(s)}\rho(dv)\,ds\\
 & =-\int_{\partial\mathcal O}\int_{\mathcal{V}}f(\sigma,v)(v\cdot n(\sigma))e^{-U(\sigma)}\rho(dv)\,d\sigma,
\end{align*}
so that the boundary term evaluates to zero. 

It follows that 
\begin{align*}
\int_{\mathcal O}\int_{\mathcal{V}}\mathcal{L}f(x,v)\pi(x)\, dx\rho(dv)=\int_{\mathcal O}\int_{\mathcal{V}}\left(u\cdot\nabla_xf(x,u) - u\cdot\nabla_x f(x,u)\right)\pi(x)\, dx\rho(dv) = 0,
\end{align*}
so that $\pi(x)\, d x \otimes\rho(dv)$ is infinitesimally invariant with respect to $Z_{t}.$
\end{proof}

% \section{Sampling over Restricted Domains}

% \label{sec:restricted}
% \input{hard_boundary.tex}

% \section{Sampling from domains with restrictions}
% \label{sec:supp-with-restrictions}
% We now consider the case where $\mathcal{O}$ is a pathwise connected, open subset of $\mathbb{R}^d$ with Lipschitz boundary $\partial \Omega$.  

% \begin{proof}[Sketch Proof of Proposition \ref{prop:boundarycase}]

% The argument then continues as in Proposition \ref{prop:invariant-distribution}, thus establishing the invariance of the target distribution.
% \end{proof}

Another possible behaviour at the boundary is to generate the new reflected direction independently of the angle of incidence.  This will also preserve the invariant distribution provided that $\rho$ is isotropic.

\begin{prop}
\label{prop:boundary_independent}
Consider the process $Z_t$ as in the previous proposition,  such that  conditions (2) and (3) (of the paper) hold and the distribution $\rho$ has mean zero.  Then  $\pi(x) \, dx \otimes \rho(dv)$ will be an invariant distribution for the process $Z_t$ if $Q_b(x, v, du)$ is independent of $v$  for all  $x\in\partial\mathcal{O}$.
\end{prop}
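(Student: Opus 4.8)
The plan is to reuse the argument for Proposition~\ref{prop:boundary-case} almost verbatim and only change the treatment of the boundary term. First I would observe that the ``bulk'' portion of that earlier computation — rewriting the jump part of $\mathcal L$ with the help of~\eqref{eq:switching_condition} and~\eqref{eq:intensity_condition}, using $-u\cdot\nabla U(x)\,e^{-U(x)} = u\cdot\nabla_x e^{-U(x)}$, and integrating by parts in $x$ — never invokes any property of the boundary kernel $Q_b$, and in particular not~\eqref{eq:q_boundary_reversible} or~\eqref{eq:q_boundary_assumption}. Hence, for every $f\in\mathcal D(\mathcal L)$ one again arrives at
\[
\int_{\mathcal O}\int_{\mathcal V}\mathcal L f(x,v)\,\pi(x)\,dx\,\rho(dv)
 = \int_{\partial\mathcal O}\int_{\mathcal V} f(\sigma,u)\,(u\cdot n(\sigma))\,e^{-U(\sigma)}\,\rho(du)\,d\sigma ,
\]
so the whole task reduces to showing that this surface integral vanishes under the new hypotheses.

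The key remark is that $f\in\mathcal D(\mathcal L)$ satisfies the domain condition~\eqref{eq:boundary_condition}, namely $f(\sigma,v)=\int_{\mathcal V}f(\sigma,u)\,Q_b(\sigma,v,du)$ for all $(\sigma,v)\in\partial\mathcal O\times\mathcal V$. Since by assumption $Q_b(\sigma,v,du)$ does not depend on $v$, the right-hand side is independent of $v$, so $f(\sigma,\cdot)$ is constant on $\mathcal V$ for each fixed $\sigma\in\partial\mathcal O$; denote that common value by $g(\sigma)$. Substituting $f(\sigma,u)=g(\sigma)$ into the surface integral and pulling $g(\sigma)e^{-U(\sigma)}$ out of the $\rho$-integral leaves
\[
\int_{\partial\mathcal O} g(\sigma)\,e^{-U(\sigma)}\left(\int_{\mathcal V} u\,\rho(du)\right)\cdot n(\sigma)\,d\sigma ,
\]
which is zero because $\rho$ has mean zero. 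This gives $\int_{\mathcal O\times\mathcal V}\mathcal L f\,\pi(x)\,dx\,\rho(dv)=0$ for all $f\in\mathcal D(\mathcal L)$, i.e.\ $\pi(x)\,dx\otimes\rho(dv)$ is infinitesimally invariant for $Z_t$.

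As in Proposition~\ref{prop:boundary-case}, this is only a formal argument, and a fully rigorous version would again require verifying that $\mathcal D(\mathcal L)$ is a core for the extended generator, which I would not pursue here. I do not anticipate any substantial difficulty: the only genuinely new ingredient is the one-line observation that $v$-independence of $Q_b$ forces $f(\sigma,\cdot)$ to be constant on $\partial\mathcal O$, after which mean-zero of $\rho$ finishes the computation. The one place to be careful is precisely the first step — confirming that the part of the Proposition~\ref{prop:boundary-case} proof being carried over really does not use~\eqref{eq:q_boundary_reversible} or~\eqref{eq:q_boundary_assumption} — so that replacing those two conditions by the present pair is legitimate.
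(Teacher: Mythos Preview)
Your proposal is correct and follows essentially the same argument as the paper: reuse the bulk computation from the proof of Proposition~\ref{prop:boundary-case} to reduce to the boundary integral, observe that $v$-independence of $Q_b$ together with~\eqref{eq:boundary_condition} forces $f(\sigma,\cdot)$ to be constant on $\partial\mathcal O$, and then factor the integral so that the mean-zero condition on $\rho$ kills it. Your write-up is in fact slightly more explicit than the paper's about why the carried-over portion does not rely on~\eqref{eq:q_boundary_reversible} or~\eqref{eq:q_boundary_assumption}.
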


\begin{proof}[Sketch Proof of Proposition \ref{prop:boundary_independent}]
Let $f \in \mathcal{D}(\mathcal{L})$, so that $f$ satisfies (\ref{eq:boundary_condition}).  By the assumptions on $Q_b$ in Proposition 1 (of the paper), this implies that $f(x,v) = f(x)$ for all $x \in \partial \mathcal{O}$.   Following the proof of Proposition 1 above, the boundary integral term becomes
\begin{align*}
\int_{\partial \mathcal O}\int_{\mathcal{V}}f(s,u)(u\cdot n(s)) e^{-U(s)}\rho(du)\,ds =\int_{\partial \mathcal O}f(s)e^{-U(s)}n(s)\,ds\cdot\int_{\mathcal{V}}u\rho(du),
\end{align*}
which is zero if $\rho$ has mean zero, as required.
\end{proof}

\section{The Zig-Zag sampler}
\label{sec:zigzag}

The Zig-Zag sampler \cite{BierkensFearnheadRoberts2016} can be recovered by choosing $N = d$ and picking as velocity space $\mathcal V = \{-1,+1\}^d$ equipped with discrete uniform distribution $\rho$, defining switching rates $\lambda_i(x,v) = \max(v_i \partial_{x_i} U(x),0)$. The corresponding switching kernels over new directions are given by
\[ Q_i(x,v, dv') = \delta_{F_i v} (dv'),\]
where $F_i : \mathcal V \rightarrow \mathcal V$ denotes the operation of flipping the $i$-th component, i.e. $(F_i v)(i) = -v(i)$, and $(F_i v)(j) = v(j)$ for $j \neq i$.

\section{Derivation of dominating intensity for logistic regression example}
\label{sup:Logistic}

A valid choice of $L$ can be derived as follows: 
Notice that $\left(\log\logit(z)\right)^{\prime}=\logit(-z)$ and
$\logit^{'}(z)=\logit(z)\left(1-\logit(z)\right)$ so that we obtain
\begin{align*}
\frac{\partial}{\partial x}\log\logit(\data_{i}\vert x) & =\logit\left(-\data_{i}x^{\top}\cova_{i}\right)\data_{i}\cova_{i}\\
\frac{\partial}{\partial^{2}x}\log\logit(\data_{i}\vert x) & =-\logit\left(-\data_{i}x^{\top}\cova_{i}\right)\left(1-\logit\left(-\data_{i}x^{\top}\cova_{i}\right)\right)\cova_{i}\cova_{i}^{\tau}
\end{align*}
Using $p(1-p)\leq\frac{1}{4}$ for $p\in[0,1]$
\[
\sup_{\left\Vert w\right\Vert \leq1}\left|w^{t}\frac{\partial}{\partial^{2}x}\log\logit(\data_{i}\vert x)w^{t}\right|\leq\frac{1}{4}\left\Vert \cova_{i}\right\Vert ^{2}
\]
So Equation (7) (of the paper) holds with $$L:=\frac{1}{4}\max_{i=1,\dots,n} \left \Vert \cova_i \right\Vert $$ as defined
above. %\thibaut{If you end up actually using a sharper bound, maybe specify it here? }

This is a linear upper bound on the intensity which can be used to sample according to (8) (of the paper) and then used for thinning as introduced in Section 1 of the paper.

\end{document}